\newtheorem{thm}{Theorem}[section]
\theoremstyle{definition}
\theoremstyle{remark}
\begin{document}

%%%%%%%%%%%%%%%%%%%%%%%%%%%%%%

\title{A binary Hopfield network with $1/\log(n)$ information rate and applications to grid cell decoding}

\author{Ila Fiete\affil{1}{University of Texas at Austin}, {David Jason Schwab}\affil{2}{Princeton University},
Ngoc Mai Tran\affil{1}{}}

\contributor{ }

%% The \maketitle command is necessary to build the title page.
\maketitle

%%%%%%%%%%%%%%%%%%%%%%%%%%%%%%%%%%%%%%%%%%%%%%%%%%%%%%%%%%%%%%%%
\begin{article}

\begin{abstract}
%In many biological systems, sensory information are coded by networks of noisy neurons. How the system corrects for such noise using other noisy neurons is an open problem. 

A Hopfield network is an auto-associative, distributive model of neural memory storage and retrieval. A form of error-correcting code, the Hopfield network can learn a set of patterns as stable points of the network dynamic, and retrieve them from noisy inputs -- thus Hopfield networks are their own decoders. Unlike in coding theory, where the information rate of a good code (in the Shannon sense) is finite but the cost of decoding does not play a role in the rate, the information rate of Hopfield networks trained with state-of-the-art learning algorithms is of the order ${\log(n)}/{n}$, a quantity that tends to zero asymptotically with $n$, the number of neurons in the network. For specially constructed networks, the best information rate currently achieved is of order ${1}/{\sqrt{n}}$. In this work, we design simple binary Hopfield networks that have asymptotically vanishing error rates at an information rate of ${1}/{\log(n)}$. These networks can be added as the decoders of any neural code with noisy neurons. As an example, we apply our network to a binary neural decoder of the grid cell code to attain information rate ${1}/{\log(n)}$.
\end{abstract}

\keywords{Hopfield networks | error-correcting codes | information theory | exponential capacity | grid cells | redundant number system}

%\abbreviations{ECC, error-correcting code; }

\section{Introduction}

\dropcap{I}n the brain, information is represented by the activity of neurons. In sensory areas, the responses of neurons vary across trials even as the stimulus is held fixed \cite{Softky93}. In higher-level, non-sensory areas, and generally across the cortex, the activity of neurons is well-modeled by a stochastic point process~\cite{Softky93}. To perform accurate estimation and memory functions with such variable responses, the cortex must contain mechanisms for error control and noise correction. 

From an information-theoretic perspective, the brain is faced with the task of noisy channel coding: to transmit information through an unreliable channel. In the classical framework, this is done by an encoder and decoder, which themselves are noise-free. A good code is one that guarantees asymptotically vanishing decoding error and has finite information rate (ratio of mutual information between the code and its noisy version, divided by the total number of bits used for encoding) \cite{Shannon48}. The first proposed example of a ``Shannon-good'' code in the brain is the grid cell code of the mammalian cortex, which represents the two-dimensional spatial location of the animal with a set of periodic spatial responses of different periods \cite{Hafting05}. This code was recently shown to be an error-correcting code with theoretically finite asymptotic information rate and exponentially vanishing error \cite{Sreenivasan11}. The discrete version of the grid cell code is also known as the redundant residue number system \cite{Fiete08, RNSreview, AnandaMohan02}. There are efficient algorithms for decoding the discrete grid cell code under small perturbations \cite{goldreich1999chinese, wang2010closed}.

However, in the brain, both the encoders and decoders are made up of noisy neurons. Thus noise is present at all stages, and it is therefore interesting to consider the combined neural cost of information storage, including not only the noisy channel but also the noise in the encoder and decoder, and to ask whether finite-rate information representation with vanishing error is possible when $n$, the number of neurons, is the neural resource. In this sense, if one could find deterministic neural decoders for the grid cell code, for example, one has to ensure that the decoders themselves are robust to noise.

The grid cell code has been analyzed as a system with separate encoder (the grid cells) and decoder (a network hypothesized to be in the hippocampus). Alternatively, a neural network can act as both the encoder and decoder, in which case we have an auto-associative memory. A classic example is the Hopfield network \cite{Hopfield82}, defined as a weighted, binary labelled graph on $n$ nodes with a dynamical rule for updating the state of each node. The labels of this graph form a discrete dynamical system on the state space $\{0,1\}^n$ of all possible firing patterns of the nodes. In the encoding phase (also known as the storage phase), the edge weights between neurons are tuned to map a collection of inputs $X$ to desired set of firing patterns $Y$, which are fixed points of this dynamical system. In the decoding phase (also known as retrieval), a noisy firing pattern $\hat{y}$ is mapped to a fixed point $y$ of the dynamic. 

A typical Hopfield network does not have good information rate. A Hopfield network with independent standard normals weights can achieve a finite asymptotic information rate \cite{tanaka1980analytic, mceliece1985number}, but it suffers from a constant probability of error in decoding. On the other hand, those trained to store randomly chosen codewords (firing patterns) have information rates that scale as $\log(n)/n$, as verified numerically and analytically \cite{weisbuch1985scaling,mceliece1987capacity, abu1985information}.

Many works focused on constructing special Hopfield networks with specific codewords as fixed points that could substantially improve the combination of error correction and information rate \cite{fulan1988hopfield, platt1986analog, hillar2012robust, gripon2011sparse}. In such approaches, the best demonstrated information rate for a guarantee of asymptotically zero error is slightly below ${1}/{\sqrt{n}}$, and is attained by a Hopfield network whose stable states are all cliques of size $\sim \sqrt{n}/2$ in a graph of $\sim \sqrt{n}$ nodes. A shortcoming, however, is the lack of an encoder that maps a small set of inputs into the special fixed points (clique states) of the network, a feature that would allow this network to store messages other than cliques. With encoding capability, the information rate of the clique storage network drops to ${2\log(n)}/{n}$ \cite{gripon2011sparse}.

In this work, we design a simple binary Hopfield network with $1/{\log(n)}$ information rate, and asymptotically zero decoding error probability when each neuron fails to transmit the correct bit with some fixed constant probability $< 0.5$. Our contributions are two-fold: first, this is the first binary Hopfield network with such capacity. Second, our network can be added to any neural decoder to nullify the effect of noisy neurons, for a small price in information rate. As an example, we construct a noise-robust neural decoder for the discrete grid cell code with information rate of $1/\log(n)$. Although collectively our encoder-decoder grid code has information rate slightly below $\log(n)/\sqrt{n}$, to the best of our knowledge, neither rates have been previously achieved.
%Entorhinal grid cells form a network of neurons with strikingly periodic firing patterns, hypothesized to estimate the changing location of an animal in space as it moves around, through integration of its own velocity \cite{Hafting05}. The grid code is the first analogue neural error-correcting code that has been shown in principle to be capable of exponential capacity (non-vanishing asymptotic information rate) at an exponentially vanishing error rate \cite{Sreenivasan11}, if supplied with an appropriate decoder. 

%The key steps in this algorithm reduce to two operations: the modulo function $\mathsf{MOD}$, and the greatest common divisor function $\mathsf{GCD}$. To build a neural discrete grid cell decoder, one needs a biological plausible way to implement these two functions over a distributive network. We propose one such network on $n = N\log^2(N)$ neurons, whose dynamic is simple local linear weights with specific update order. Thus, the grid cell encoder-decoder has information rate $1/\log^2(n)$. Open questions include more efficient distributive decoders. 

% 
%It is possible that the self-decoding property of Hopfield networks imposes a cost in terms of achievable information rate relative to good codes in coding theory, which does not consider the decoding cost. We seek to further explore the boundaries of what is possible with Hopfield networks.  
%

\section{Background}

A binary neural network is a weighted, directed labelled graph on $n$ nodes (neurons). At time $t$, each neuron has a state of either $1$ (firing), or $0$ (not firing). The \emph{state} of the network at time $t$ is thus a binary vector $\mathbf{x}^t \in \{0,1\}^n$. This evolves as follows: asynchronously and in consecutive order starting with $i = 1$,
\begin{equation}\label{Hopdynamics}
\mathbf{x}^{t+1}_i = \left\{\begin{array}{cc}1 & \ \ \text{if } \hspace{0.5em} f_i(\mathbf{x}^t) > \theta_i,  \\ &  \\0 & \text{otherwise.} \ \ \ \ \  \end{array}\right.
\end{equation}
where $f_i: \mathbb R^n \to \mathbb R$ is the neural transfer function, or activation function, and $\theta_i \in \mathbb{R}$ the synaptic activation of neuron $i$. In this work, we shall only consider linear functions $f_i(\mathbf{x}^t) = \sum_jw_{ij}\mathbf{x}^t_j$, or quadratic functions $f_i(\mathbf{x}^t) = \sum_{j,k} w_{ijk}\mathbf{x}_j^t\mathbf{x}_k^t+\sum_jw_{ij}\mathbf{x}_j^t$. These give rise to linear and quadratic neural networks, respectively. 

{\em The Hopfield network.} A binary Hopfield network is a linear neural network with cycles. Fixed points of (\ref{Hopdynamics}) are called stable states. Hopfield showed that in each iteration, the network Hamiltonian does not increase. Thus, after a finite (and usually small) number of updates, each initial state $\mathbf{x}$ converges to its \textit{attractor} $\mathbf x^{*}$, also called \textit{stable point} or \textit{memory} of the network.

{\em Error rate, information rate and good error-correcting codes.}
One can view the Hopfield networks as error-corecting codes. The set of attractors, $X^{*} \subseteq \{0,1\}^n$, are the codewords. Given a perturbed state $\mathbf x = \mathbf x^{*} + \xi \bmod 2$, where $\mathbf x^{*}$ is an attractor state and $\xi$ is a binary noise vector with iid entries that are non-zero with probability $p$, the network dynamics (\ref{Hopdynamics}) decode $\mathbf x$ by mapping it to some attractor state $\widehat{\mathbf{x}}$. If $\widehat{\mathbf{x}} = \mathbf x^{*}$, the decoding is correct; otherwise, there has been an error. The error rate is the expected fraction of decoding errors. The information rate of the network is the ratio of the number of information bits divided by total bits (which equals $n$, the number of neurons in the binary Hopfield network). An encoding step would involve mapping a discrete set of variables indexed by $q=1,2,\cdots, Q$, to the attractor states. A good error-correcting code, in Shannon's sense \cite{Shannon48}, is one in which the error rate is goes asymptotically to zero and the information rate goes asymptotically to a finite quantity (as $n\rightarrow \infty$). 

{\em The discrete grid cell code.}
The grid cell code for animal location is a description of the spatially periodic firing patterns observed in neurons of the entorhinal cortex of mammals \cite{Hafting05}. The tuning curve of a grid cell, as a function of animal location along the flat floor of any explored enclosure, is multiply-peaked with peaks at every vertex of a (virtual) equilateral triangular lattice that tiles the ground. Groups of cells, called one network or one population, have identical tuning up to all possible spatial phase shifts, and thus encode 2D animal location as a 2D phase modulo the periodic lattice. Distinct populations have different spatial periods. There are estimated to be $N \approx 5-10$ distinct populations, with periods $\lambda_1<\lambda_2<\cdots \lambda_N$ in the range of [0.3, 3] meters. For simplicity, we will reduce the problem to one spatial dimension, considering space to be a 1D variable and the grid cell responses to be periodic in space. In the discrete version of the grid code, assume that the grid periods $\lambda_i$ are co-prime integers, and the location variable $x$ is also integer-valued. The discrete grid code corresponds to the redundant number system \cite{Fiete08, goldreich1999chinese, wang2010closed}. Concretely, let $1 \leq K \leq N$ and define $R = \prod_{i=1}^N\lambda_i$, $R_\ell = \prod_{i=1}^K\lambda_i$. Define $\phi: \mathbb{Z}_R \to \mathbb{N}^N$ by
\begin{equation}\label{eqn:phi}
\phi(x) = (x \bmod \lambda_1, \ldots, x \bmod \lambda_N).
\end{equation}
The set $\{\phi(x): x \in [0, R_\ell-1]\}$ are the codewords  of the discrete grid code.

\section{Hopfield networks with information rate $1/\log(n)$}

Our Hopfield network consists of $n/m(n)$ disconnected sub-networks of neurons. Each sub-network of $m(n)$ neurons can robustly store one bit of information.  Collectively, they can robustly represent $n/m(n)$ bits, that is, they can store $2^{n/m(n)}$ patterns. We shall choose $m(n) = O(\epsilon^{-1}\log(n))$, where $\epsilon$ is the probability of error in each neuron. We now describe the possible designs for the sub-network, and prove their noise robustness.

\textbf{The voter model.} In this case, all of the $m(n)$ neurons are pairwise connected with equal weights of $1$, forming a clique. Assuming $m(n)$ is even, set the threshold $\theta_i = \frac{m(n)}{2}$ for all $i = 1, \ldots, m(n)$. Suppose the network is initialized at state $\mathbf{x}^0$. The update dynamic (\ref{Hopdynamics}) is by simple majority: $\mathbf{x}_i^1 = 1$ if more than half of the remaining nodes are $1$, otherwise it is $0$. After each update of individual neurons, the majority state continues to be in the majority. Thus, the Hopfield dynamic converges after one pass to either the all-$1$ or all-$0$ vector.
%
\begin{comment}
\vspace{-.2cm}
\begin{figure}[ht]
\begin{center}
   \includegraphics[width=1\linewidth]{voter.pdf} 
\caption{\textbf{Left: A cluster of the voter network. All neurons are connected with positive weights (excitatory). Right: a neuron being updated, with white representing $0$ and gray representing $1$. In this case, after one iteration, the network state converges to all $0$'s.}}
\label{fig:voter}
\vspace{-.2cm}
\end{center}
\end{figure}
\end{comment}

\textbf{The bistable switch model.} Assuming $m(n)$ is even, split each sub-network into two pools, where neurons of the same pool excite each other (weight $1$), while neurons of different pools inhibit each other (weight $-1$), Figure \ref{fig:voter.bistable}; set thresholds $\theta_i=0$. Again at each time step, $\mathbf{x}_i^1 = 1$ if more neurons in its pool are firing than neurons in the other pool, and $\mathbf{x}_i^1 = 0$ if the converse is true. By the same argument, the network converges after one step to one of the two stable states: either all neurons in pool 1 are $1$ and the others are $0$, or vice versa. 
\vspace{-1cm}
\begin{figure}[ht]
\begin{center}
   \includegraphics[width=1\linewidth]{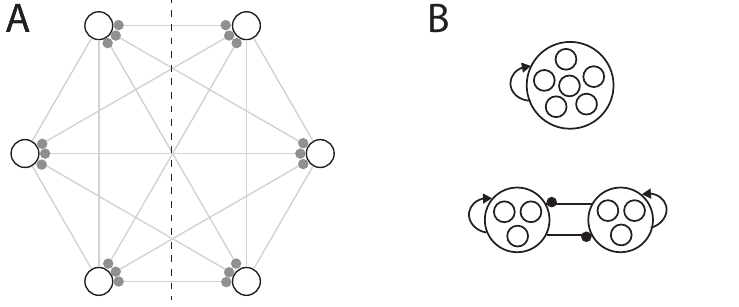} 
\vspace{0.3cm}
\caption{\textbf{A. A cluster of the switch network. Two pools of $m(n)/2$ neurons, separated by the dashed line, inhibit each other (edges terminating in solid circles). Neurons within a pool excite each other. B. The voter and bistable switch models correspond to two ways of building a switch: a single self-excitatory pool with high-threshold neurons (voter model, top), or two self-excitatory pools with mutual inhibition and zero threshold (bistable switch model, bottom).}}
\label{fig:voter.bistable}
\vspace{-.3cm}
\end{center}
\end{figure}
\vspace{-.2cm}
\begin{thm}\label{thm:voter.bistable}
Starting from a fixed point $\mathbf x$ of the Hopfield network, suppose that each neuron independently switches its state with probability $\epsilon < 1/2$. For $m(n) \geq \frac{\log(n)}{2(1/2-\epsilon)^2}$, both the voter and bistable switch networks recovers $\mathbf x$ after one dynamic update with probability tending to 1. That is, they have information rate $\sim 1/\log(n)$ and asymptotically zero error.
\end{thm}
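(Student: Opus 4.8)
The plan is to analyze a single sub-network of $m = m(n)$ neurons and show it recovers its stored bit with probability $1 - o(1/n) \cdot$ (number of sub-networks), so that a union bound over all $n/m$ sub-networks still gives total failure probability $o(1)$. For both models the one-step dynamics reduce to the same combinatorial event: the network misrecovers precisely when a strict majority of the $m$ relevant neurons have been flipped. Indeed, in the voter model, starting from the all-$1$ fixed point, after corruption each neuron sees a count of firing neighbors among the other $m-1$; it settles to the wrong value only if at least $m/2$ of the $m$ neurons were flipped to $0$ (a short argument handles the boundary/tie cases and the fact that once the corrupted count is a strict minority every update keeps it so, so one pass suffices — this is exactly the convergence fact already established before the theorem). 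The bistable switch model is identical after identifying ``pool 1 firing'' with the bit value $1$: a neuron compares the firing count in its pool to that in the other pool, and the stored configuration is destroyed only if more than half of the $m$ neurons ended up in the wrong pool's state, i.e. again at least $m/2$ flips. So in both cases
\[
\Pr[\text{sub-network fails}] \leq \Pr\!\left[ \sum_{i=1}^{m} Z_i \geq m/2 \right],
\]
where $Z_i$ are i.i.d. Bernoulli($\epsilon$) with $\epsilon < 1/2$.

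The second step is a standard Chernoff/Hoeffding bound. Since $\mathbb{E}\!\left[\sum Z_i\right] = \epsilon m$ and $1/2 > \epsilon$, we have a deviation of $(1/2 - \epsilon)m$ above the mean, and Hoeffding's inequality gives
\[
\Pr\!\left[ \sum_{i=1}^m Z_i \geq m/2 \right] \leq \exp\!\left( -2 m (1/2 - \epsilon)^2 \right).
\]
Plugging in the hypothesis $m = m(n) \geq \frac{\log n}{2(1/2-\epsilon)^2}$ makes the right-hand side at most $e^{-\log n} = 1/n$. Hence each sub-network fails with probability at most $1/n$, and by the union bound the probability that \emph{any} of the $n/m(n) \leq n$ sub-networks fails is at most $(n/m(n)) \cdot (1/n) = 1/m(n) \to 0$. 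Therefore the whole network recovers $\mathbf{x}$ exactly with probability tending to $1$.

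The information-rate claim is then bookkeeping: the network has $n$ neurons and stores $2^{n/m(n)}$ patterns, so it carries $n/m(n)$ information bits, giving information rate $\frac{n/m(n)}{n} = \frac{1}{m(n)} = \Theta\!\left(\frac{(1/2-\epsilon)^2}{\log n}\right) = \Theta(1/\log n)$ for fixed $\epsilon$, as claimed. I do not expect a genuine obstacle here; the only place demanding care is the reduction in the first step — precisely verifying that the one-step update of each model fails if and only if (or at least, only if) at least half the neurons were flipped, including the treatment of ties when $m$ is even and the observation that a single asynchronous pass is enough because the majority/pool-imbalance is preserved under each individual update (already argued in the text preceding the theorem). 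Everything after that is the textbook Hoeffding-plus-union-bound computation.
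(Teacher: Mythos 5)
Your proposal is correct and follows essentially the same route as the paper: Hoeffding's inequality bounds each cluster's failure probability by $\exp(-2m(n)(1/2-\epsilon)^2) \leq n^{-1}$, and then one aggregates over the $n/m(n)$ clusters (the paper multiplies $(1-n^{-1})^{n/\log(n)}$ using independence, whereas you use a union bound, but the two are interchangeable here). Your added care about the reduction to the ``at least $m/2$ flips'' event and the information-rate bookkeeping only makes explicit what the paper leaves implicit.
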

\begin{proof} Let $Y$ be the number of neurons which switched state in the first cluster. Then $Y \sim Binomial(m(n), \epsilon)$. By Hoeffding's inequality, the probability of this cluster switching to the wrong state is bounded by
$$\mathbb{P}(Y > \frac{m(n)}{2}) \leq \exp(-2m(n)(1/2-\epsilon)^2) = n^{-1}.$$
Thus, the probability that no cluster will switch to the wrong state is at least
$$ (1-n^{-1})^{n/\log(n)} \to 1 \mbox{ as } n \to \infty. $$
\end{proof}

\section{A neural grid cell decoder}
%We propose a neural decoder for the discrete version of the grid code on $N$ grid cell lattices that requires only $O(N\log(N))$ neurons. This gives a neural decoder for the discrete one-dimensional grid cell code with information rate $\sim 1/\log(n)^2$. %Combining an efficient code with an efficient decoder would produce an overall error-correcting representation with self-decoding that is capable of a much higher information rate than known neural codes. 
Recall the discrete grid cell code defined by (\ref{eqn:phi}). Let $\phi'(x)$ be the noisy version of $\phi(x)$, where each coordinate is independently perturbed with probability $\epsilon$:
$$ \phi'_i(x) = \left\{ 
\begin{array}{ccc}
\phi_i(x) & \mbox{ w.p. } & 1 - \epsilon \\
Uniform[0,\lambda_i-1] & \mbox{ w.p.} & \epsilon.
\end{array}
\right. $$
\begin{thm}\label{thm:grid.decoder} For a fixed constant $c \in (0,1)$, suppose $\lambda_1 \sim N^c$, $\lambda_N \lesssim N^{1+c}$. Assume $K/N \to \rho \in (0,1)$ as $N \to \infty$. Suppose $\epsilon \leq \frac{(1-\rho)c}{(1+c)^2}$. There exist 
\begin{itemize}
	\item a binary neural network with quadratic threshold functions on $O(N\log(N))$ neurons, and
	\item a binary neural network with linear threshold functions on $O(N^{2+2c}\log(N))$ neurons,
\end{itemize}
which are grid cell decoders with asymptotically zero error.  
\end{thm}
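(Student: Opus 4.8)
The plan is to read Theorem~\ref{thm:grid.decoder} as a decoding statement for a redundant residue number system and to assemble the network in two layers: a noiseless threshold circuit that carries out the decoding, with the robust clusters of Theorem~\ref{thm:voter.bistable} wrapped around it to absorb neuron failures. The combinatorial backbone is a separation property of the code~(\ref{eqn:phi}): assuming WLOG $\lambda_1<\dots<\lambda_N$, if $x\neq y$ lie in the message range $[0,R_\ell-1]$ and $S=\{i:x\equiv y\bmod\lambda_i\}$, then by coprimality $\prod_{i\in S}\lambda_i$ divides $x-y$, whence $\prod_{i\in S}\lambda_i\le|x-y|\le R_\ell-1<\prod_{i=1}^{K}\lambda_i$; since $\prod_{i=1}^{K}\lambda_i$ is the least product of $K$ of the periods, this forces $|S|\le K-1$. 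Hence distinct codewords differ in at least $N-K+1$ coordinates, so minimum-Hamming-distance (equivalently, maximum-agreement) decoding returns $\phi(x)$ as soon as the number $E$ of perturbed coordinates satisfies $E<(N-K+1)/2$; and $E\sim\mathrm{Binomial}(N,\epsilon)$ with $\epsilon<1-\rho$ (the hypothesis gives this, as $\frac{c}{(1+c)^2}<1$), so by Hoeffding this is a $1-e^{-\Omega(N)}$ event.

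First I would reduce the noisy-neuron problem to the noiseless one. By Theorem~\ref{thm:voter.bistable}, a voter/switch cluster of $\Theta(\log n)$ neurons recovers one stored bit under per-neuron failure probability $<1/2$ with error $n^{-\Omega(1)}$; thickening every wire and gate of a size-$S$ threshold circuit into such a cluster, and union-bounding over the $\mathrm{poly}(S)$ clusters, produces a noise-robust network of size $O(S\log S)$. So it suffices to build ideal (noiseless) decoders with $S=O(N)$ using quadratic thresholds and with $S=O(N^{2+2c})$ using only linear thresholds; the $\log N$ factors in the statement are this robustification overhead. For the quadratic circuit I would keep each residue $x\bmod\lambda_i$ in its $O(\log N)$ bits (so $O(\log N)$ wires per population, $O(N\log N)$ overall) and use the degree-two terms to run the modular-consistency tests that drive the agreement rule. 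For the linear circuit, lacking products, I would precompute those tests with auxiliary one-hot units indexed by residues modulo products of a bounded number of the periods; a one-hot code over a product of two of the largest periods has $\lesssim\lambda_N^2\lesssim N^{2+2c}$ units, which fixes the $O(N^{2+2c}\log N)$ count. In either case the decoded integer is read out coordinate by coordinate as corrected residues, so there is no unit-per-codeword blow-up.

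For the error bound, Hoeffding gives $E\le(\epsilon+\delta)N$ outside probability $e^{-\Omega(N)}$, while a wrong candidate can gain extra agreements only on perturbed coordinates, in expectation on at most $\sum_i\lambda_i^{-1}\le N\lambda_1^{-1}=N^{1-c}=o(N)$ of them, again concentrated. The one real loss is that these bounded-degree, polynomial-size circuits cannot implement full minimum-distance decoding: each local test resolves the unknown only modulo a bounded product of periods, so the effective correction radius is a constant fraction of $N-K$ rather than half of it, and bookkeeping how many periods' worth of information a degree-one (respectively degree-two) test can combine---a quantity controlled by the exponents $c=\log_N\lambda_1$ and $1+c=\log_N\lambda_N$---is what degrades the naive threshold $\epsilon<(1-\rho)/2$ to $\epsilon\le\frac{(1-\rho)c}{(1+c)^2}$. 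Requiring $\epsilon N$ to stay under this radius is exactly the hypothesis, and a final union bound over output units and clusters yields asymptotically zero error.

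The main obstacle is the middle step: compiling the maximum-agreement decoder of a code over an exponentially large alphabet into a \emph{bounded-degree}, polynomial-size threshold circuit with a good correction radius. One cannot afford a unit per candidate codeword, so the reconstruction must be organized coordinate-by-coordinate (or hierarchically), and one must verify both that degree one (respectively degree two) suffices for the requisite modular arithmetic and that the neuron count and the $\epsilon$ threshold come out as claimed. The separation lemma, the Theorem~\ref{thm:voter.bistable} robustification, and the Hoeffding and union-bound estimates are routine by comparison.
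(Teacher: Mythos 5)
Your outer scaffolding matches the paper --- replace every deterministic unit by a voter/bistable cluster of size $O(\epsilon^{-1}\log N)$ from Theorem~\ref{thm:voter.bistable} and union-bound, which is where the $\log N$ factor comes from --- and your separation lemma for the code~(\ref{eqn:phi}) is correct. But the heart of your argument is missing, and you say so yourself: the entire difficulty of the theorem is the step you defer as ``the main obstacle,'' namely exhibiting a concrete noiseless circuit of size $O(N)$ (quadratic thresholds) or $O(N^{2+2c})$ (linear thresholds) that actually decodes. A maximum-agreement decoder over this code has no obvious coordinate-by-coordinate or hierarchical realization of polynomial size, and no such realization is constructed; ``one must verify that degree one (respectively degree two) suffices for the requisite modular arithmetic'' is the theorem, not a proof of it. Relatedly, your account of where the constant $\frac{(1-\rho)c}{(1+c)^2}$ comes from --- a degradation of the half-minimum-distance radius caused by bounded-degree local tests --- is reverse-engineered from the statement rather than derived; nothing in your writeup produces that specific quantity.

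The paper takes a different and more concrete route: it does not attempt minimum-distance decoding at all, but implements the Unique-Decode algorithm of Goldreich--Ron--Sudan (Algorithm~\ref{alg:unique-decode}), whose correctness in the regime $\epsilon \leq \frac{(1-\rho)c}{(1+c)^2}$ is imported from \cite{goldreich1999chinese} --- that is where the threshold comes from. The circuit construction then reduces each stage (CRT reconstruction of $x'$, two-dimensional lattice basis reduction as in Algorithm~\ref{alg:svb}, the closest-vector step, and the final division) to repeated nearest-integer/binary-expansion computations of weighted sums, realized by the small feed-forward module of Figure~\ref{fig:mod}; the inner products and norms needed for the lattice steps are what force degree-two interactions, and linearizing them by introducing one auxiliary unit per product term $\phi'_i(x)\phi'_j(x)$ is what yields the $O(N^{2+2c})$ count for the linear-threshold version (versus your one-hot-over-residue-pairs proposal, which is a different device arriving at a similar count). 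If you want to salvage your approach you would need either an explicit polynomial-size threshold circuit for agreement decoding of residue codes --- which is not known to be routine --- or to fall back, as the paper does, on an existing efficient decoding algorithm and show each of its primitives is threshold-computable.
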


In our setup, the number of neurons used for encoding is $\sim \exp((2+c)\log(N))$. The number of codewords is $\gtrsim \exp(\rho c N \log(N))$. Thus, although our decoder has information rate $\sim \frac{1}{\log(N)}$ by itself, that of the combined encoder-decoder system is $\sim \log(n)n^{-\frac{1+c}{2+c}}.$ As $c \to 0$ this rate tends to $\log(n)/\sqrt{n}$, comparable to the best previously known rate in the Hopfield clique code \cite{hillar2012robust}. In contrast, the analogue grid code with $\lambda_i \sim O(1)$ has finite information rate {Sreenivasan11}. It would be very interesting to discover the analogue of our decoding theorem for this case. 

In the chosen regime of $\epsilon$, \cite{goldreich1999chinese} gave an exact decoding algorithm for finding $x$ from $\phi'(x)$, see Algorithm \ref{alg:unique-decode}. Our neural networks implement this algorithm for decoding. One step of the algorithm requires the computation of the norm and inner product of vectors in $\mathbb R^2$. These vectors are expressed as weighted linear combinations of entries of $\phi'(x)$. Thus, the network necessarily has second-order interactions. This can be achieved by either introducing $(\sum_{i=1}^N \lambda_i)^2 \sim N^{2+2c}$ more variables, one for each interaction term $\phi'_i(x)\phi'_j(x)$, or assume that the threshold function in each neuron is quadratic. For simplicity of presentation, we shall construct the network with quadratic threshold function. The linear threshold network has essentially the same structure, with an additional $O(N^{2+2c})$ neurons to represent the interaction terms.

\begin{algorithm}[H]
 \KwData{$\phi'(x), \lambda_1, \ldots, \lambda_N, R, E$}
 \KwResult{$x$}
 Define $E = \prod_{i=N-e+1}^N \lambda_i$ \;
 \Begin{
 	\textbf{Chinese Remainder}: compute $x'$, the integer where $\phi'(x) = \phi(x')$ \;
 	\textbf{Integer Program}: compute $(y,z)$, the pair of integers which satisfy: $0 \leq y \leq E,  0 \leq yx' - zR \leq R/E$. \;
 	\Return{$z/y$}
 }
 \caption{Unique-Decode \cite{goldreich1999chinese}} \label{alg:unique-decode}
\end{algorithm}

\subsection{Proof of the grid cell decoder}
Assuming that all neurons involved behave deterministically, we shall construct a binary neural network with quadratic threshold functions on $O(N)$ that satisfies the theorem. Then, one can replace each deterministic neuron with a voter/bistable cluster of $O(\epsilon^{-1}\log(N))$ identical neurons. By Theorem \ref{thm:grid.decoder}, the network achieves the same asymptotic decoding error as the deterministic network. Thus, the total number of neurons required is of order $O(N\log(N))$.
%The hypothesis of the theorem is such that the Unique-Decode algorithm in \cite{goldreich1999chinese} decodes successfully with high probability. Indeed, for any constant $c \in (0,1)$, the number of entries $e$ of $\phi(x)$ corrupted by the noise is upper bounded by $(1+c)N\epsilon$. The algorithm of \cite{goldreich1999chinese} works for all $e < (N-K)\frac{\log(\lambda_1)}{\log(\lambda_1) + \log(\lambda_N)}$. In particular, if $\lambda_N = \lambda_1^{O(1)}$, $e$ is a constant fraction of $N$. Since the average gap between consecutive primes of order $N$ scales as $\log(N)$ \cite{ingham1932distribution}, this condition is satisfied, for example, when $\lambda_1 \sim N^c$ for any small fixed constant $c \in (0,1)$, in which case $\lambda_N \lesssim N^{1+c}.$ In this case, if $K = \rho N$, then one can choose $\epsilon \leq \frac{(1-\rho)c}{(1+c)^2}$. In other words, for large $\lambda_i$'s and small, finite error probabilities, the one-dimensional discrete grid code has finite information rate.
%The hypothesis of the theorem is such that the Unique-Decode algorithm in \cite{goldreich1999chinese} decodes successfully with high probability.
We now describe how each step of the Unique-Decode algorithm of \cite{goldreich1999chinese} can be implemented by simple feed-forward neural networks.

By the Chinese Remainder Theorem, $x'$ is the smallest integer such that
$$x' \equiv \sum_{i=1}^N \phi'_i(x) w_i \mod R,$$ 
where $w_i$ is $\frac{R}{\lambda_i}$ times its inverse in the cyclic group $\mathbb{Z}_R$. Define $u = \sum_{i=1}^N \phi'_i(x) w_i$. One may compute $x'$ by a neural network in Figure \ref{fig:mod}. We can think of the $i$-th grid network as one neuron with $\lambda_i$ possible states. The 'hidden' layer consists of $m$ neurons $b_i$, which compute the binary representation of the integer $ \lfloor u/R \rfloor$. Explicitly, the neurons $b_i$ are updated in the order $b_m, b_{m-1}, \ldots$, where
$$ b_i = \left\{ 
\begin{array}{cc}
1 & \mbox{ if } u - R\sum_{j>i}b_j2^j < R2^i \\
0 & \mbox{ else. }
\end{array}
\right. $$

Finally, $x' = u - \lfloor u/R \rfloor R$, which can then be used as input weights for other neurons. Since $u$ is at most $NR^2$, we choose $m = \log_2(NR^2) < 3n$. Thus, the first step of Algorithm \ref{alg:unique-decode} can be implemented by a neural network of size at most $3n$.

\begin{figure}[ht]
\begin{center}
   \includegraphics[width=1.0\linewidth]{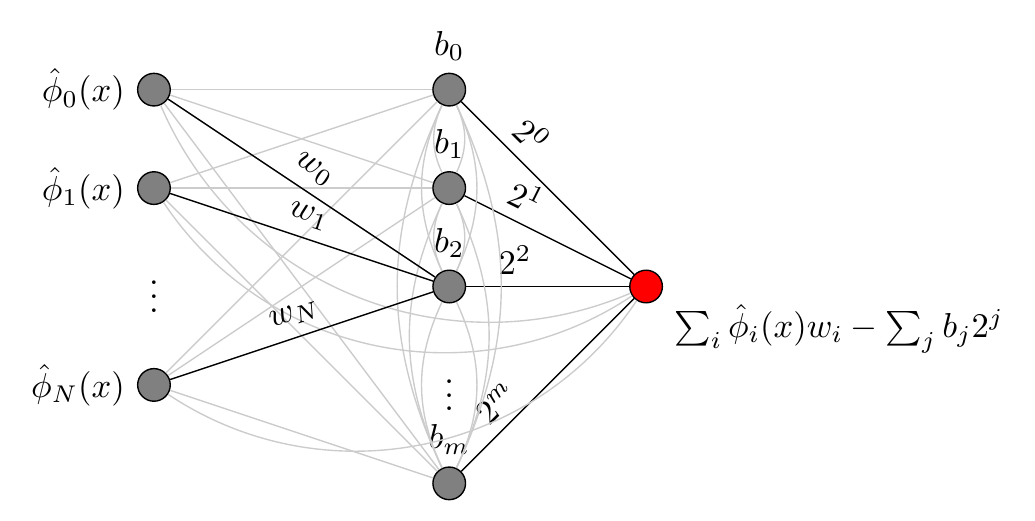} 
\caption{\textbf{A binary neural network with linear activation function that stores $\lfloor \sum_{i=1}^N \phi'_i(x) w_i / R \rfloor$ in binary. A neuron requiring $x'$ as input (depicted in red) can recover $x'$ as the sum $\sum_{i=1}^N \phi'_i(x) w_i - R\sum_jb_j2^j$. With a general real input, the same network can be used to find the nearest integer to it. Our network consists of blocks of these network, since the main operation involves finding nearest integer to real inputs.}}
\label{fig:mod}
\vspace{-.3cm}
\end{center}
\end{figure}

The second step of Algorithm \ref{alg:unique-decode} is an integer feasibility program: given a polytope $P$, decide if $P$ contains an integer point. While such problems are NP-Hard in general, 
for a given $x'$, this is a specific integer program in two variables. Furthermore, the constraint polytope $P$ is a parallelogram, whose sides are given by $\{y = 1\}$, $\{y = E\}$, $\{yx' - zR = 0\},$ and $\{yx' - zR = R/E\}$. The Lenstra algorithm \cite{lenstra1983integer} translates to this case as follows. First, let $T$ be the $2 \times 2$ matrix with column vectors $T_1 = (2E,0)^\top$ and $T_2 = (\frac{2Ex'}{R},\frac{2}{E-1})^\top$. Define $\mathbf{c} = (-1,\frac{E+1}{E-1})^\top$ in $\mathbb{R}^2$. These vectors are chosen such that $TP - \mathbf{c}$ is the unit square in $\mathbb{R}^2$. Second, find the shortest vector basis of the lattice formed by integer linear combinations of $T_1$ and $T_2$. This can be done by Algorithm \ref{alg:svb}. Third, find the point $p$ in the lattice spanned by $\{\mathbf b_1,\mathbf b_2\}$ closest to $\mathbf c$. Finally, $(y,z) = T^{-1}(\mathbf p+\mathbf c)$.

\begin{algorithm}[H]
 \KwData{A lattice basis $T_1, T_2 \in \mathbb{R}^2$}
 \KwResult{Shortest vector basis $\mathbf b_1, \mathbf b_2$, with $\|\mathbf b_1 \| \geq \|\mathbf b_2 \|$}
 Initialize $T_2 = \mathbf b_1$, $T_1 = \mathbf b_2$. \; 
 Define $p(\mathbf b_2, \mathbf b_1) = \frac{\mathbf b_1 \cdot \mathbf b_2}{\|b_1\|^2}$. \;
 \Begin{
 	\While{$p(\mathbf b_2, \mathbf b_1) > \frac{1}{2}$}{
 		$(\mathbf b_1, \mathbf b_2) \longleftarrow (\mathbf b_2, -\mathbf b_1)$ \;
 		$b_2 \longleftarrow b_2 - \lfloor p(\mathbf b_2, \mathbf b_1)\rceil b_1$ \;
 	}
 	\Return{$\mathbf b_1, \mathbf b_2$}
 }
 \caption{Shortest vector basis \cite[\S 17]{galbraith2012mathematics}. } \label{alg:svb}
\end{algorithm}
Consider Algorithm \ref{alg:svb}. Here $\lfloor x \rceil$ represents the integer nearest to $x$ for $x \in \mathbb R$. 
%The number of iterations Algorithm \ref{alg:svb} takes to converge is $\sim \log(2E) = \frac{(1-\rho)c}{1+c}N$ \cite{vallee2008lattice}. 
 A network like the one depicted in Figure \ref{fig:mod}, with input weights that sum up to $p(\mathbf b_2, \mathbf b_1)$, carries out a binary search to find the largest integer below $p(\mathbf b_2, \mathbf b_1)$. Thus, such network can find $\lfloor p(\mathbf b_2, \mathbf b_1)\rceil$ and output the updated values of $\mathbf b_1$ and $\mathbf b_2$. By the same analysis as the previous step, this network requires $O(N)$ neurons. 

The third step of the Lenstra algorithm is equivalent to finding the two integers closest to the projection of $\mathbf c$ along $\mathbf b_1$ onto $\mathbf b_2$, and that along $\mathbf b_2$ onto $\mathbf b_1$. The network used in the previous step can solve this problem, again with $O(N)$ neurons. In the last step, $(y,z) = T^{-1}(\mathbf p+\mathbf c)$ is a linear computation. Again, a binary search produces the integer nearest to $y/z$, which in this case is $y/z$ itself, by the correctness of the Algorithm \ref{alg:unique-decode} \cite{goldreich1999chinese}. 

Our neural network implementation of Algorithm \ref{alg:unique-decode} outputs the integer $x \in [0, R_\ell]$, which represents the absolute location of the mammal in space. Then for each $i = 1, \ldots N$, one computes $\phi_i(x) = x \mod \lambda_i$ using the network of Figure \ref{fig:mod}. This completes the decoder for the error-correcting code. \qed 

%Entries of $T$ and $c$ are linear functions in the variable $x'$ and the constants $N, E$.

\section{Discussion}

A portion of the complexity of our grid cell decoder is hidden in the computation time: a number of steps in our grid cell decoder involve recursive computations in which neuron states are updated sequentially. It would be interesting to know of more distributed computation designs, which trade off time for space complexity. %Given that for the grid cell network in rodents, $N$ is in the range of $5-20$ \cite{Stensola12}, faster decoder networks with more neurons may be more biologically plausible.

%Unlike the analogue case, the discrete grid cell encoder does not encode at finite information rate. It requires $\sim \sum_{i=1}^N\lambda_i$ many neurons. With the requirement that the $\lambda_i$'s are distinct coprime integers and that the gap $\lambda_N - \lambda_1$ is at least $N$, the total number of neurons in the encoder is $O(N^2)$. Under the assumption of Theorem \ref{thm:grid.decoder}, the grid cell encoder-decoder combination requires $O(N^{2+2c})$ neurons, bringing the information rate of this code to be $\frac{1}{n^{1/2(1+c)}}$, a little smaller than $\frac{1}{\sqrt{n}}$. In the analogue case, the $\lambda_i$'s are not required to be integers, and thus they can be kept fixed as $N$ increases, with the possibility of a better scaling of information rate with $n$.  

The analogue grid code may be closer to the remainder redundancy method for encoding integers under noisy Chinese Remainder. In this case, the integers $\lambda_i$'s have a large $\mathsf{gcd}$ $M$, and thus the effective amount of information for each $i$ is $\lambda_i/M$. This system can represent integers uniquely in the range $\mathsf{lcm}(\lambda_1, \ldots, \lambda_N) / M^{N-1}$. Provably correct decoding algorithms exist when each coordinate $\phi_i(x)$ has a non-zero but small, finite phase shift \cite{wang2010closed, shparlinski2004noisy}. 
%Such a situation is biologically plausible: if grid cells receive continual correction from the hippocampus, and if the mammal's velocity is finite, then indeed the phase shift between corrections is finite. 
Like the discrete case, the fundamental steps in decoding algorithms for the analogue grid code are to find solutions to the shortest vector basis and closest vector basis problems. Thus, their neural implementations would be very similar to our proposed neural network. Constructing in detail appropriate neural network algorithms for decoding the analogue grid code is part of our ongoing work on this problem. 

\section{Summary}
We designed simple binary Hopfield networks with an information rate that scales as $1/\log(n)$, improving upon the previously known achievable scaling of $1/\sqrt{n}$ from existing works. Our network can be added to any existing deterministic neural network to improve its noise stability. 
% XXX this statement above, which also appears in the intro, is not something I fully understand. Do you mean to say that we can replace each neuron in a deterministic neural network with one clique of size $\log(n)$, and thus give the network robustness in the presence of noise?  
% AAA YES. 
We also constructed a binary neural decoder with quadratic activation functions for the grid cell code, with information rate $1/\log(n)$. Our network is guaranteed to perform exact decoding under small independent phase shifts in the grid code. This is a step toward efficient decoding of the analogue grid cell code in one and two dimensions with a small number of neurons. 

\bibliographystyle{pnas2009}

%\begin{thebibliography}{}
\bibliography{references,entorhinal2,entorhinal2new,GCdevelopment}
%\end{thebibliography}

\end{article}

\end{document}